\def\e{{\rm e}}
\def\a{\alpha}
\newcommand{\be}{\begin{eqnarray}}
\newcommand{\ee}{\end{eqnarray}}
\newcommand{\C}{\mathbb{C}}
\newcommand{\R}{\mathbb{R}}
\def\e{{\rm e}}
\newtheorem{theorem}{Theorem}[section]
\newtheorem{pro}{Proposition}[section]
\newtheorem{rmk}{Remark}[section]
\newenvironment{proof}[1][Proof]{\noindent\textbf{#1.} }{\ \rule{0.5em}{0.5em}}
\def\d{{\rm d}}
\def\a{\alpha}
\def\diag{{\rm diag}}
\def\<{\langle}
\def\>{\rangle}
\def\diag{{\rm diag}}
\newcommand{\beq}{\begin{equation}}
\newcommand{\eeq}{\end{equation}}
\newcommand{\bmat}{\begin{displaymath}}
\newcommand{\emat}{\end{displaymath}}
\def\1{{\bf 1}}
\def\ga{\gamma}
\begin{document}
\title{{Non self-adjoint
operators with real spectra and extensions of quantum mechanics}}
\author{
N. Bebiano\footnote{ CMUC, University of Coimbra, Department of
Mathematics, P 3001-454 Coimbra, Portugal (bebiano@mat.uc.pt)},
J.~da Provid\^encia\footnote{CFisUC, Department of Physics,
University of Coimbra, P 3004-516 Coimbra, Portugal
(providencia@teor.fis.uc.pt)}} \maketitle
\begin{abstract}
In this article, we review the general quantum mechanical setting
associated to a non self-adjoint Hamiltonian with  real spectrum.
Spectral properties of the Hamiltonian of a simple model
of the Swanson type are investigated.
The eigenfunctions associated to the real simple eigenvalues
are shown to form complete systems but not a (Riesz) basis,
which gives rise to difficulties in the rigorous mathematical formulation
of quantum mechanics. A new  inner product, which is appropriate for the physical interpretation of the model,
has been consistently introduced. The dynamics of the system is described.
Some specificities of the theory of non self-adjoint operators
with implications in quantum mechanics are discussed.
\end{abstract}
\section{Introductiom}
 In 
non-relativistic quantum mechanics,
the state of a particle is described, at the instant $t$, by a function
$\Psi_t(x)$, where $x$ denotes the particle coordinate. This function is
 called the {\it wave function}.
Its time evolution
is described  by the time dependent {\it Schr\"odinger equation},
$$i\frac{\partial \Psi_t(x) }{ \partial t}=H\Psi_t(x),$$
where $H$ is the {\it Hamiltonian operator} of the system. In most relevant cases $H$ acts on
an infinite dimensional separable Hilbert space
 $\cal H$,
endowed with the inner product $\langle\cdot,\cdot\rangle$ and corresponding norm $\|\cdot\|$.
The fundamental axiom of conventional formulations of quantum mechanics is that
$H$  is {\it Hermitian}, or synonimously, {\it selfadjoint}. That is, $H=H^*$, for $H^*$ the {\it adjoint} of $H$
$$\langle Hf,g\rangle:=\langle f,H^*g\rangle$$
for all $f,g\in {\cal H}$ such that $f\in{\cal D}(H)$ and $g\in{\cal D}(H^*)$.
Throughout, ${\cal D}(\cdot)$ will denote the domain of the operator under consideration.
Also the {\it observables} of the system are Hermitian, which ensures that the involved eigenvalues are real and
the corresponding eigenfunctions can be taken orthonormal in such a way that they form a basis of
the Hilbert space. As a consequence, meaningful
properties on the dynamics of the system follow.

A formal solution of the time dependent Schr\"odinger equation, which rules the system dynamics,
is  given by
$$\Psi_t(x)=\e^{-iHt}\Psi_0(x),$$
where $\Psi_0(x)$ is the wave function in the initial state $t=0.$
A simple consequence of the hermiticity of $H$ is the invariance of the norm of the wave function with time,
as $\exp(-iHt)$ is {\it unitary},
$$\|\Psi_t\|^2=
\langle\Psi_t,\Psi_t\rangle=
\langle\e^{-iHt}\Psi_0,\e^{-iHt}\Psi_0\rangle=
\langle\e^{iHt}\e^{-iHt}\Psi_0,\Psi_0\rangle=
\langle\Psi_0,\Psi_0\rangle.
$$
This property is physically important because it means that the number of particles
of the system does not change with time.
If the energy of a particle is measured in the state described by the wave function $\Psi(x)$,
the {\it expectation value} of the measurement, in a statistical sense, is given by the {\it Rayleigh quotient}
$$E:=\frac{\langle H\Psi,\Psi\rangle}{\langle\Psi,\Psi\rangle},$$
which is real if $H$ is Hermitian, as it should.


During the second half of last century, energy states of atoms,
molecules and atomic nuclei
have been usually described as eigenfunctions of selfadjoint {\it Schr\"odinger operators}.
The publication in 1998 by Bender and Boettcher of the seminal paper  on non-Hermitian Hamiltonians
with  $\mathrm{P}\mathrm{T}$-{\it symmetry} \cite{[1]},
where $\mathrm{P}$ and $\mathrm{T}$ are, respectively, the {\it parity} (or {\it space reflexion}) and the {\it time reversal} operators:
$$\mathrm{P}\Psi(x):=\Psi(-x),\quad \mathrm{T}\Psi(x):=\overline{\Psi(x)},$$
is a landmark. The development of $\mathrm{P}\mathrm{T}$-{\it symmetric quantum mechanics} was initiated and a growing literature
on $\mathrm{P}\mathrm{T}$-models found applications in different domains of physics.

Certain relativistic extensions of quantum
mechanics lead naturally to non-Hermitian Hamiltonian operators,
$H\neq H^*$. In this case, the Rayleigh quotient $\langle H\Psi,\Psi\rangle/\langle\Psi,\Psi\rangle$
does not provide the energy expectation value because in general it is not real, for $\Psi$ complex,
and the norm $\langle\exp(-iHt)\Psi_0,\exp(-iHt)\Psi_0\rangle$ becomes time dependent,
which is undesirable in the physical context. 
It became  fundamental to investigate
formulations of Quantum Mechanics for
non-Hermitian Hamiltonian operators, mathematically consistent and physically meaningful.
This objective has been the aim of intense
research activity in the last two decades. We refer to \cite{bagarello*} and references therein.
New  results opened new directions
 both in theoretical and experimental fronts, in classical and quantum domains.
Non-Hermitian Hamiltonians having real spectra exhibit a pathological behavior.
In the next section we illustrate the difficulties originated by these operators
in the development of mathematically rigorous quantum theories.

\subsection{Quasi-Hermitian QM}

The problem of how to construct a  consistent non-Hermitian quantum theory
has been investigated, mainly inspired by the knowledge that PT symmetric Hamiltonians possess
real spectra and allow for a unitary time evolution with a redefined inner product in the Hilbert space where the operator lives.
A necessary condition for developing such a theory is obviously the reality of the spectrum of the Hamiltonian, $\sigma(H)$,
but it is far from being sufficient.
In this context, there have been attempts to develop the so called {\it quasi-Hermitian quantum mechanics},
where the Hamiltonian $H$ is a {\it quasi-selfadjoint} operator,
that is, which satisfies the quasi-selfadjointness operator relation
\begin{equation}H^*\Theta=\Theta H,\label{H*Theta}\end{equation}
with $\Theta=T^*T$ a  positive, bounded and boundedly invertible operator, called a {\it metric}.
An operator $H$ with the above
property is actually Hermitian for the  new inner product
$$\ll \phi,\psi\gg:=\langle\Theta\phi,\psi\rangle=\langle T\phi,T\psi\rangle.
$$

The concept of quasi-selfadjointness, which
goes back to Dieudonn\'e \cite{dieudonne},
is of remarkable interest in the set up of non-Hermitian quantum mechanics.
A modified  inner product in the underlying Hilbert space,  relatively to which 
$H$ becomes selfadjoint via the similarity transformation $THT^{-1},$
\begin{equation}\label{h0}
\widetilde H=THT^{-1},
\end{equation}
where $\widetilde H$ is Hermitian, has been searched. If $T$ is bounded
and boundedly invertible, then the spectra of $THT^{-1}$ and $H$ coincide and the
eigenfunctions share basis properties.
Then, some fundamental issues of selfadjoint operators remain valid, such us spectral stability with respect to perturbations,
unitary evolution, etc.
It is not very common to find in
the literature non selfadjoint models for which such a metric is constructed,
neither
the existence of a metric operator is guaranteed.
Problems arise if $T$ or $T^{-1}$are unbounded, such as it may happen that the spectrum of $H$ is
discrete while $THT^{-1}$ has no eigenvalues.

In the finite dimensional setting, all the involved operators are bounded. In particular, if $T^{-1}$ exists it is automatically bounded, and
the concepts of quasi-Hermiticity and similarity to a selfadjoint operator work without difficulty,
because we are dealing essentially with finite matrices. The adjoint of $H^*$ is simply the transconjugate, the time evolution deduced from the Hamiltonian is unitary, and so it preserves the total probability of the system
given by $\int|\Psi_t(x)|^2\d x$ \cite{bagarello*}.

The rest of this note is organized as follows. In Section \ref{S2}, we consider a simple model
of the {\it Swanson type} \cite{swanson}, and review the general
quantum mechanical setting associated to a non selfadjoint Hamiltonian with a real spectrum.
In Section \ref{S3}, spectral properties of the Hamiltonian are investigated.
The eigenfunctions associated to the real simple eigenvalues
are shown to form complete systems but not a Riesz basis. In Section \ref{S4}, the dynamics of the system is described and
a new  inner product, which is
appropriate for the physical interpretation, is consistently introduced.
In Section \ref{S6}, some specificities of non selfadjoint operators
with implications in quantum mechanics are discussed
and useful mathematical background in this context is pointed out.

\section{The model}\label{S2}
We will be concerned with a model on the Hilbert space  ${\cal H}=L^2(\R)$ of square
integrable functions in one real variable, endowed with the standard
inner product
$$\langle
\Phi_\alpha,\Phi_\beta\rangle=\int_{-\infty}^{+\infty}
\Phi_\alpha(x)\overline{\Phi_\beta(x)} \d x,\quad \Phi_\alpha,\Phi_\beta\in L^2(\R).$$

The system we wish to study is a very simple model
of the {\it Swanson type} \cite{swanson}, characterized by the following Hamiltonian operator on $L^2(\R)$
\begin{equation}
H:=-\frac{1}{4}\frac{\partial^2}{\partial x^2}+x^2-\gamma\left(\frac{1}{2}+x\frac{\partial}{\partial x}\right),\quad \gamma\in\R\backslash\{0\},
\quad |\gamma|\leq1,
\end{equation}

Observe that, on $L^2(\R),$
\begin{equation}    \nonumber
H^*=-\frac{1}{4}\frac{\partial^2}{\partial x^2}+x^2+\gamma\left(\frac{1}{2}+x\frac{\partial}{\partial x}\right),
\end{equation}
so that $H\neq H^*$, for $\gamma\neq0.$ For $\gamma=0,$ it is obvious that $H=H^*.$

Most relevant operators in Quantum Mechanics are unbounded.
Unboundedness of operators in the infinite dimensional setting unavoidably restrict
their domains of definition to nontrivial subspaces of the Hilbert space.
The real parameter $\gamma$ must be carefully chosen so that the spectral theory of the operator
 can be developed in a rigourous mathematical framework.
The real parameter  $\gamma$, which {\it measures} the degree on non-Hermiticity of the Hamiltonian,
is assumed to be non-zero to avoid the well-known Hermitian case. We will consider $H$ as a perturbation
of the famous {\it harmonic oscillator} $H_{ho}$:
\begin{equation}    \nonumber
H_{ho}=-\frac{1}{4}\frac{\partial^2}{\partial x^2}+x^2,
\end{equation}
which coincides with $\R(H)=(H+H^*)/2.$
For our purposes, we impose the condition of smallness of $\gamma$, $|\gamma|<1,$
in order to ensure that the non-Hermitian term $V=-\gamma\left(\frac{1}{2}+x\frac{\partial}{\partial x}\right)$ does not completely change the behavior of $H_{ho}$.

The domain $\cal D$ of $H$ is
$${\cal D}:=\{\Psi(x)\in W^{1,2}(\R) :x^2\Psi(x)\in{L^2(\R)}\}.$$
Here, $W^{1,2}(\R)$ denotes the usual Sobolev space of functions on $L^2(\R)$ whose weak first and second derivatives belong to $L^2(\R).$
Observe that the domain contains the subspace $\cal S$ of functions $f(x)$ such that $\exp{(\gamma x^2)}f(x)\in{L^2(\R)},$
which in turn contains $C^\infty(\R),$
$${\cal S}:=\{\Psi(x)\in{\cal D} :\e^{\gamma x^2}\psi(x)\in{L^2(\R)}\}.$$
Thus, $H$ is densely defined in this domain, which ensures the existence and uniqueness of its adjoint $H^*$.
As $\Re(H)$ is closed, and $V$ is relatively bounded with respect to $\Re(H),$ with the relative bound smaller than 1, then $H$ is closed (\cite[Theorem 3.3]{gohberg}). The closedness of $H$ is a crucial starting point for the investigation of its spectrum,
because the spectrum is only meaningfully defined for closed operators.
We will show that $H$ has a purely discrete real spectrum if $|\gamma|$ is sufficiently small.

\subsection{A basis of $L^2(\R)$}
We also consider the auxiliary operator  $H_0:L^2(\R)\rightarrow L^2(\R),$
\begin{equation}
H_0:=-\frac{1}{4}\frac{\partial^2}{\partial x^2}+(1+\gamma^2)x^2.
\end{equation}

Notice that the following operator identity {\it formally} holds,
\begin{equation}\label{E3}
H_0=\e^{\gamma {x^2}}H\e^{-\gamma x^2},
\end{equation}
in the sense that the operators in the left and in the right hand sides act in the same manner on any
wave function $\Phi\in L^2(\R),$
\begin{equation}\nonumber
H_0\Phi(x)=\e^{\gamma {x^2}}H\e^{-\gamma x^2}\Phi(x).
\end{equation}
The word ``formal'' refers to the fact that $\exp(\gamma x^2)$ is unbounded.
We may also write the operator equality in (\ref{E3}) as,
\begin{equation}\nonumber
H=\e^{-\gamma {x^2}}H_0\e^{\gamma x^2},
\end{equation}
where it is implicitly assumed that the operators in  both sides
of the operator equality act on wave-functions $\Psi(x)\in{\cal S}\subset{\cal D}$. 
However, while $H_0$ goes from $L^2(\R)$ to $L^2(\R),$  $H$ goes from $\cal S$ to $\cal S$.

The spectrum and eigenvectors of $H_0$ are easily obtained
with the help of the {\it annihilation} bosonic operator
\begin{eqnarray*}
a:={(1+\gamma^2)^{1/4}}x+\frac{1}{2}~\frac{1}{(1+\gamma^2)^{1/4}}~\frac{\partial}{\partial x},
\end{eqnarray*}
and its adjoint, the {\it creation} bosonic operator,
\begin{eqnarray*}a^*:={(1+\gamma^2)^{1/4}}x-\frac{1}{2}~\frac{1}{(1+\gamma^2)^{1/4}}~\frac{\partial}{\partial x},
\end{eqnarray*}
which satisfy the commutation relation
$$[a,a^*]=aa^*-a^*a=\1,$$
where as usual, $\1$ denotes the identity operator.

The factorization of $H_0$ in terms of the bosonic operators is straightforwardly obtained,
\begin{equation}\nonumber
H_0=\sqrt{1+\gamma^2}~a^*a+\frac{1}{2}\sqrt{1+\gamma^2}~\1.
\end{equation}

The wave function
\begin{equation}
\nonumber
\Phi_0(x)=\e^{-{x^2/\sqrt{1+\gamma^2}}}\in L^2(\R),
\end{equation}
satisfies $a\Phi_0=0$ and describes the so called {\it groundstate} of $H_0$,
as it is
 an eigenfunction of $H_0$ associated with the lowest eigenvalue,
\begin{equation}
\nonumber
E_0=\frac{1}{2}\sqrt{1+\gamma^2}.
\end{equation}
The wave function
\begin{equation}
\Phi_n(x)=a^{*~n}\Phi_0(x),~n\geq0,
\nonumber
\end{equation}
is an eigenfunction of $H_0$ and describes the so called $n^{th}$ {\it bosonic state}. The associated eigenvalue is
\begin{equation}
E_n=\left(n+\frac{1}{2}\right)\sqrt{1+\gamma^2}, ~n=0,1,3,\ldots.
\nonumber
\end{equation}

The wave functions $\Phi_n(x)$ are orthogonal
\begin{equation}
\langle\Phi_n,\Phi_m\rangle=n!\delta_{nm}
\langle\Phi_0,\Phi_0\rangle,\quad m,n\geq0,
\nonumber
\end{equation}
for $\delta_{mn}$ the {\it Kronecker symbol} ($=1$ for $m=n$ and 0 otherwise),
and constitute a basis for $L^2(\R),$
\begin{equation}{\cal F}_\Phi:=\{\Phi_n={a^*}^n\Phi_0:n\geq0\},\label{bases}\end{equation}
as for any $\Phi\in{\cal H},$ there exists a set of complex coefficients such that $\Phi$
can be uniquely expressed as
$$\Phi=\sum_kc_k\Phi_k.$$

\subsection{Eigenvalues and Eigenfunctions of $H$}
Next, we consider the functions
\begin{equation}
\Psi_n(x)=\e^{-\gamma x^2}\Phi_n(x),
\nonumber
\end{equation}
which
belong to ${\cal S}\subset{\cal D}$, since we
clearly have  $\exp(\gamma x^2)\Psi_n(x)\in L^2(\R).$
If $$\gamma\geq-\sqrt{\frac{\sqrt5-1}{2}},$$
then, $\Psi_n(x)\in L^2(\R),$ because
$$\e^{-\gamma x^2}\Phi_0(x)=\e^{-\gamma x^2}\e^{-x^2/\sqrt{1+\gamma^2}}$$
 belongs to $L^2(\R)$ if
$(\gamma+1/\sqrt{1+\gamma^2})<1.$

Observing that
\begin{equation}
\nonumber
H\Psi_n(x)=H\e^{-\gamma x^2}\Phi_n(x)=\e^{-\gamma x^2}\e^{\gamma x^2}H\e^{-\gamma x^2}\Phi_n(x)=\e^{-\gamma x^2}H_0\Phi_n(x),
\end{equation}
we  obtain
\begin{equation}
H\Psi_n(x)=\left(n+\frac{1}{2}\right)~\sqrt{1+\gamma^2}\Psi_n(x).
\nonumber
\end{equation}
That is, $\Psi_n(x)$, for $n=0,1,2,\ldots,$ are eigenfunctions of $H$ and $(n+1/2)\sqrt{1+\gamma^2}$
are the associated eigenvalues.
If the  similarity relation (\ref{h0}) holds for a bounded and boundedly invertible $T$, the eigenvalues  of $H$
are the same of $H^*$. As in this case $T=\exp(\gamma x^2)$ is unbounded, this is not guaranted.
In the next section, this will be confirmed.
\subsection{Matrix representation of $H$}
Let us consider the bosonic operators
\begin{eqnarray*}
b:={x}{}+\frac{1}{2}~\frac{\partial}{\partial x},\quad
b^*:={x}{}-\frac{1}{2}~\frac{\partial}{\partial x},
\end{eqnarray*}
which satisfy the commutation relation
$$[b,b^*]=\1.$$
In terms of the bosonic operators, $H$ becomes
$$H=b^*b+\frac{\gamma}{2}({b^*}^2-b^2)+\frac{\1}{2}.$$
We notice that the {\it parity operator} $\mathrm{P}$ commutes with $H$, and
the eigenspaces of $\mathrm{P}$ are invariant subspaces of $H$.

With respect to the basis constituted by the eigenfunctions of the {\it number operator} $N:=b^*b$,
$${\cal F}_\phi=\{\phi_n={b^*}^n\phi_0:~b\phi_0=0,~n\geq0\},$$
the operator $b$ is represented by the upper shifted matrix
\begin{eqnarray*}
B=\left[\begin{matrix}
0&\sqrt1&0&0&\ldots\\
0&0&\sqrt2&0&\ldots\\
0&0&0&\sqrt3&\ldots\\
\vdots&\vdots&\vdots&\vdots&\ddots\end{matrix}\right],
\end{eqnarray*}
and the operator $b^*$ is represented by the tanspose of $B$, $B^T.$ These matrices satisfy the commutation relation,
$$[B,B^T]=I.$$

The matrix $B^T$ is a {\it raising matrix} because, if $\Phi$ is an eigenvector of $A_0$
associated with the eigenvalue $\Lambda$,
$$A_0\Phi=\Lambda\Phi,$$
then $B^T\Phi$ is an eigenvector of $A_0$ associated with the upwardly shifted eigenvalue
$\Lambda+1,$
$$A_0B^T\Phi=(\Lambda+1)B^T\Phi,$$
and similarly $B$ is a {\it lowering matrix}, as
$$A_0B\Phi=(\Lambda-1)B\Phi,$$
if $ B\Phi\neq0.$

In the same basis, the operator ${b^*}^2$ is represented by the matrix $A_+$,
$$A_+=\left[\begin{matrix}
0&0&0&\ldots\\
0&0&0&\ldots\\
\sqrt{1\times2}&0&0&\ldots\\
0&\sqrt{2\times3}&0&\ldots\\
0&0&\sqrt{3\times4}&\ldots\\
\vdots&\vdots&\vdots&\ddots
\end{matrix}\right]$$
the operator ${b}^2$ is represented by the matrix $A_-={A_+}^T$ and
the operator ${b^*}b$ is represented by the matrix $A_0={\rm diag}(0,1,2,3,\ldots).$
Notice that
$$[A_0,A_+]=2A_+,\quad [A_0,A_-]=-2A_-.$$

Thus, $H$ is represented by the pentadiagonal matrix
\begin{eqnarray*}&&B^TB+\frac{\gamma}{2}((B^T)^2-B^2)+\frac{I}{2}=A_0+\frac{\gamma}{2}(A_+-A_-)+\frac{I}{2}\\
&&=\left[\begin{matrix}
1/2&0&-\frac{\gamma}{2}\sqrt{1\times2}&0&\ldots\\
0&3/2&0&-\frac{\gamma}{2}\sqrt{2\times3}&\ldots\\
\frac{\gamma}{2}\sqrt{1\times2}&0&5/2&0&\ldots\\
0&\frac{\gamma}{2}\sqrt{2\times3}&0&7/2&\ldots\\
0&0&\frac{\gamma}{2}\sqrt{3\times4}&0&\ldots\\
0&0&0&\frac{\gamma}{2}\sqrt{4\times5}&\ldots\\
\vdots&\vdots&\vdots&\vdots&\ddots
\end{matrix}\right],\end{eqnarray*}
which may be written as
\begin{eqnarray*}\left[\begin{matrix}
1/2&-\frac{\gamma}{2}\sqrt{1\times2}&0&\ldots\\
\frac{\gamma}{2}\sqrt{1\times2}&5/2&-\frac{\gamma}{2}\sqrt{3\times4}&\ldots\\
0&\frac{\gamma}{2}\sqrt{3\times4}&0&\ldots\\
\vdots&\vdots&\vdots&\ddots
\end{matrix}\right]\oplus\left[\begin{matrix}
3/2&-\frac{\gamma}{2}\sqrt{2\times3}&0&\ldots\\
\frac{\gamma}{2}\sqrt{2\times3}&7/2&-\frac{\gamma}{2}\sqrt{4\times5}&\ldots\\
0&\frac{\gamma}{2}\sqrt{4\times5}&0&\ldots\\
\vdots&\vdots&\vdots&\ddots
\end{matrix}\right].
\end{eqnarray*}
The eigenfunctions of $N$, $\phi_0(x),$ $\phi_2(x),\phi_4(x),
\cdots$ are even functions, while $\phi_1(x),$ $\phi_3(x), \phi_5(x), \cdots$ are odd functions.
The eigenspaces of $\mathrm{P}$
are invariant subspaces of $H$.
Thus, $H$ is represented by
real tridiagonal matrices called {\it pseudo-Jacobi matrices} in the bases
of these subspaces, that is, Jacobi matrices pre multiplied by
$J={\rm diag}(1,-1,1,-1,\ldots).$

Next, in order to determine raising and lowering matrices for $(A_0+(A_+-A_-)/2+{I}/{2}), $
we look for linear combinations of $B^T$ and $B$ satisfying
$$\left[\left(B^TB+\frac{\gamma}{2}(({B^T})^2-B^2)+\frac{I}{2}\right),(xB^T+yB)\right]=\lambda(xB^T+yB).$$
By some computations, we find that $\lambda=\pm\sqrt{1+\ga^2}$
and we obtain the sought for lowering and raising  matrices
\begin{eqnarray*}
&&D=
\frac{1}{2}\left(({1+\ga^2})^{1/4}+\frac{1+\ga}{({1+\ga^2})^{1/4}}\right)B+
\frac{1}{2}\left(({1+\ga^2})^{1/4}-\frac{1-\ga}{({1+\ga^2})^{1/4}}\right)B^T,\\
&&D^\ddag=
\frac{1}{2}\left({{1+\ga^2}}^{1/4}-\frac{1+\ga}{({1+\ga^2})^{1/4}}\right)B+
\frac{1}{2}\left(({1+\ga^2})^{1/4}+\frac{1-\ga}{({1+\ga^2})^{1/4}}\right)B^T.
\end{eqnarray*}
These matrices satisfy the following commutation relations,
$$[D,D^\ddag]=I.$$
$$\left[\left(A_0+\frac{\gamma}{2}(A_+-A_-)+\frac{I}{2}\right),D^\ddag\right]=\sqrt{1+\ga2}D^\ddag,$$
$$
\left[\left(A_0+\frac{\gamma}{2}(A_+-A_-)+\frac{I}{2}\right),D\right]=-\sqrt{1+\ga^2}D.$$
This means that, if $\Upsilon$ is an eigenvector of $(A_0+{\gamma}/{2}(A_+-A_-)+{I}/{2})$
associated with the eigenvalue $\Lambda$, then
$D^\ddag\Upsilon$ and $D\Upsilon$ are eigenvectors associated, respectively,
with the eigenvalues $\Lambda+\sqrt{1+\ga^2}$ and  $\Lambda-\sqrt{1+\ga^2}$.
Moreover,
\begin{equation}\nonumber
A_0+\frac{\gamma}{2}(A_+-A_-)+\frac{I}{2}=\sqrt{1+\gamma^2}~D^\ddag D+\frac{1}{2}\sqrt{1+\gamma^2}~I.
\end{equation}
Thus
$$\sigma\left(A_0+\frac{\gamma}{2}(A_+-A_-)+\frac{I}{2}\right)=\left(n+\frac{1}{2}\right)\sqrt{1+\ga^2},~~n\geq0.$$

An eigenvector $ \Upsilon_0$ of $\left(A_0+\frac{\gamma}{2}(A_+-A_-)+{I}/{2}\right)$ associated with the lowest eigenvalue
$\sqrt{1+\ga^2}$ is such that
$$D\Upsilon_0=0.$$
We find
$$\Upsilon_0=\left[1,0,\sqrt{1\over2}~\eta,0,\sqrt{1\times3\over2\times4}~\eta^2,0,\sqrt{1\times3\times5\over2\times4\times6}~\eta^3,0,\ldots\right]^T,$$
where
$$\eta=\frac{1-\ga-\sqrt{1+\ga^2}}{1+\ga+\sqrt{1+\ga^2}}.$$
An eigenvector of $\left(A_0+{\gamma}/{2}(A_+-A_-)+{I}/{2}\right)$ associated with the  eigenvalue
$(n+1/2)\sqrt{1+\ga^2}$ is given by
$$\Upsilon_n={D^\ddag}^n\Upsilon_0.$$
\color{black}


\section{Spectrum of H}\label{S3}
The spectrum of an operator on a finite dimensional
Hilbert space is exhausted by the eigenvalues, but, in the infinite dimensional setting,
there are additional parts of the spectrum of $H$ to be considered.

The {\it resolvent set} of $H$, denoted by $\rho(H)$, is constituted
by all the complex numbers $\lambda$ for which
$
(H-\lambda)^{-1}$
exists as a bounded operator on $\cal H$. The {\it spectrum} of $H$ is the complement of the resolvent set
$$\sigma(H)=\C\backslash\rho(H).$$
The set of all eigenvalues of $H$ is the
{\it point spectrum}, 
denoted by $\sigma_p(H),$ and is
formed by the complex numbers $\lambda$ for which
$H-\lambda:{\cal D}(H)\rightarrow {\cal H}$ is not injective.
The {\it continuous spectrum} is constituted by those $\lambda$ such that $H-\lambda$ is injective
and its range is dense.
The  {\it residual
spectrum} consists of  those $\lambda$ for which $H-\lambda$ is injective and its range
is not dense. The spectrum  $\sigma(H)$ is the union of
these three disjoint spectra.

The  spectrum of selfadjoint operators is non empty, real,
and the residual spectrum is empty, while the spectrum of non
selfadjoint operators can be empty or coincide with the whole
complex plane (see, e.g., refs. \cite{siegle,[3]}).
As already mentioned, the spectrum of an operator is meaningfully defined only for closed operators,
that is, those for which the set
$$\{\langle H\psi,\psi\rangle:~\psi\in{\cal D}\}$$
is a linear closed subspace of ${\cal H}\times{\cal H}.$ In Subsection \ref{ss3.2}, we show that $H$
is closed and that $\sigma(H)$
reduces to the point spectrum.
For this purpose we next introduce the central auxiliary concept of numerical range.

\subsection{Numerical range}
The {\it numerical range} of $H$ is denoted and defined as
$$W(H):=\{\langle H\psi,\psi\rangle:\psi\in{\cal D}(H),\|\psi\|=1\}.$$
In general $W(H)$ is neither open nor closed, even when $H$ is a closed
 operator. For $H$ bounded the following spectral inclusion holds:
 $$\sigma(H)\subset\overline{W(H)}.$$
\begin{theorem}\label{P3.1}
The numerical range of $H$ is bounded by the hyperbola branch
\begin{eqnarray*}
 &&\left(x-\frac{1}{2}\right)^2-\frac{y^2}{4\ga^2}=\frac{1}{4},\quad x\geq1.
\end{eqnarray*}
\end{theorem}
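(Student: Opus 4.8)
The plan is to treat a point $z=\langle H\psi,\psi\rangle$ of $W(H)$ through the Cartesian decomposition $H=\Re(H)+\tfrac{\gamma}{2}({b^*}^2-b^2)$, where $\Re(H)=H_{ho}$ is the Hermitian part and $\tfrac{\gamma}{2}({b^*}^2-b^2)$ the skew-Hermitian part. For a unit vector $\psi\in\mathcal D$ both $\Re z=\langle\Re(H)\psi,\psi\rangle$ and $\Im z=\tfrac{1}{2i}\langle(H-H^*)\psi,\psi\rangle$ are expectation values of self-adjoint operators, hence real, and I would read off the two coordinates of $z$ separately.

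First I would rewrite these two real numbers with the bosonic operators $b,b^*$ of the matrix-representation subsection, using $H=b^*b+\tfrac{\gamma}{2}({b^*}^2-b^2)+\tfrac12$ and $[b,b^*]=\1$. This gives the structural identities $\Re z=\|b\psi\|^2+\tfrac12$ and $\|b^*\psi\|^2=\|b\psi\|^2+1$, together with $\Im z=\gamma\,\Im\langle b^*\psi,b\psi\rangle$. In particular $\Re z$ is pinned to the single nonnegative quantity $\|b\psi\|^2$, whose least value $0$, attained when $b\psi=0$ (the harmonic ground state), fixes the vertex of the boundary curve on the real axis.

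The link between the two coordinates is the Cauchy–Schwarz inequality $|\Im\langle b^*\psi,b\psi\rangle|\le\|b^*\psi\|\,\|b\psi\|$. Substituting the identities above and squaring eliminates $\|b\psi\|^2$, leaving a single quadratic relation between $\Re z$ and $\Im z$; this is the inequality that confines $W(H)$ to the region bounded by the asserted hyperbola branch. To show the branch is genuinely the boundary and not merely an outer envelope, I would exhibit saturating states: Cauchy–Schwarz is an equality exactly when $b^*\psi\parallel b\psi$, which holds on the Gaussian family $\psi_a(x)=\e^{-ax^2/2}$ with $\Re a>0$, all lying in $\mathcal S\subset\mathcal D$. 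Evaluating $\Re z$ and $\Im z$ on $\psi_a$ and letting $a$ vary sweeps out the whole branch, after which convexity of $W(H)$ (Toeplitz–Hausdorff) identifies $W(H)$ with the closed convex region it bounds; the real eigenvalues $(n+\tfrac12)\sqrt{1+\gamma^2}$ then sit inside, consistently with $\sigma(H)\subset\overline{W(H)}$.

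The main obstacle is the unboundedness of $H$. The Cauchy–Schwarz step yields the inclusion for free, but asserting that the curve is the exact boundary requires controlling the extremal vectors inside the domain $\mathcal D$ and coping with the fact that $W(H)$ need be neither open nor closed. The explicit Gaussian computation is what settles attainment of the supremum of $|\Im z|$ at fixed $\Re z$, and hence that the hyperbola branch is exactly traced rather than only approached.
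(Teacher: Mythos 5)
Your structural identities are all correct: with $H=b^*b+\frac{\gamma}{2}({b^*}^2-b^2)+\frac12$ and $[b,b^*]=\1$, a unit vector $\psi$ gives $\Re z=\|b\psi\|^2+\frac12$, $\|b^*\psi\|^2=\|b\psi\|^2+1$, and $\Im z=\gamma\,\Im\langle b^*\psi,b\psi\rangle$; and Cauchy--Schwarz plus saturation on a Gaussian family is a legitimate route, genuinely different from the paper's, which instead computes the bottom eigenvalue of $\Re(\e^{-i\theta}H)$ for each direction $\theta$ and takes the envelope of the supporting lines. The gap is your unchecked final identification. Putting $x=\Re z$, $y=\Im z$, your identities give $\|b\psi\|^2=x-\frac12$ and $\|b^*\psi\|^2=x+\frac12$, so Cauchy--Schwarz yields $y^2\le\gamma^2\left(x-\frac12\right)\left(x+\frac12\right)=\gamma^2\left(x^2-\frac14\right)$, i.e.\ the region bounded by the branch of $x^2-\frac{y^2}{\gamma^2}=\frac14$ with vertex $\left(\frac12,0\right)$ and asymptotes $y=\pm\gamma x$. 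This is \emph{not} the curve in the statement: $\left(x-\frac12\right)^2-\frac{y^2}{4\gamma^2}=\frac14$, $x\ge1$, has vertex $(1,0)$ and asymptotes $y=\pm2\gamma\left(x-\frac12\right)$. So the step ``this is the inequality that confines $W(H)$ to the region bounded by the asserted hyperbola branch'' is false, and your argument does not prove the theorem as written.

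The failure is informative, however, because the stated theorem is itself wrong and your method exposes this. Your own observation that $\Re z$ attains its minimum $\frac12$ at the harmonic ground state ($b\psi=0$, $\psi=\e^{-x^2}$), where also $\Im z=0$, shows $\left(\frac12,0\right)\in W(H)$; likewise the lowest eigenvalue $\frac12\sqrt{1+\gamma^2}<1$ (for $|\gamma|\le1$) lies in $W(H)$. Both points are excluded by the stated region $x\ge1$. The source of the discrepancy is in the paper's proof: the eigenvalues of $\Re(\e^{-i\theta}H)$ are not $(n+\frac12)\left(\cos\theta-\sqrt{\cos^2\theta-4\gamma^2\sin^2\theta}\right)$ --- at $\theta=0$ this gives $0$, whereas $\Re(H)=b^*b+\frac12$ has eigenvalues $n+\frac12$ --- but rather $(n+\frac12)\sqrt{\cos^2\theta-\gamma^2\sin^2\theta}$, and the envelope of the corrected supporting lines $x\cos\theta+y\sin\theta=\frac12\sqrt{\cos^2\theta-\gamma^2\sin^2\theta}$ is exactly the hyperbola your Cauchy--Schwarz bound produces. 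So, carried honestly to the end, your approach proves a corrected statement --- $W(H)$ is bounded by the branch of $x^2-\frac{y^2}{\gamma^2}=\frac14$ with $x\ge\frac12$, the boundary being attained on Gaussians $\e^{-ax^2}$ with $|a|=1$, $\Re a>0$ --- but as a proof of the literal statement it fails, and no correct argument could succeed, since that statement contradicts points demonstrably in $W(H)$.
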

\begin{proof}
By the Toeplitz-Hausdorff Theorem, the numerical range of $H$ is  convex. So, let us consider the supporting line of $W(H)$
perpendicular to the direction $\theta$. The distance
of this line to the origin is the lowest eigenvalue of
$$\Re (\e^{-i\theta}H)=b^*b\cos\theta-i\frac{\gamma}{2}({b^*}^2-b^2)\sin\theta+\frac{\cos\theta}{2},$$
provided this operator is bounded from below, which occurs for $-\pi/2\leq\theta<\pi/2$.
The eigenvalues of  $\Re(\e^{-i\theta}H)$  are readily determined by
the EMM \cite{davydov}, and they are found to be
$$
(n+ \frac{1}{2}) \left(\cos\theta - \sqrt{\cos^2\theta - 4 \ga^2 \sin^2\theta}\right),~n\geq0.$$
Thus, the considered supporting line has equation
\begin{equation}x\cos\theta+y\sin\theta=
 \frac{1}{2} \left(\cos\theta - \sqrt{\cos^2\theta - 4 \ga^2 \sin^2\theta}\right).\label{support}\end{equation}
The boundary equation of  $W(H)$ is found eliminating $\theta$
between (\ref{support}) and
\begin{equation}\nonumber -x\sin\theta+y\cos\theta=
 \frac{1}{2}\frac{\d}{\d x} \left(\cos\theta - \sqrt{\cos^2\theta - 4 \ga^2 \sin^2\theta}\right).\end{equation}
\end{proof}\color{black}
\begin{figure}[h]
\centering
\includegraphics[width=.5\textwidth, height=0.5\textwidth]
{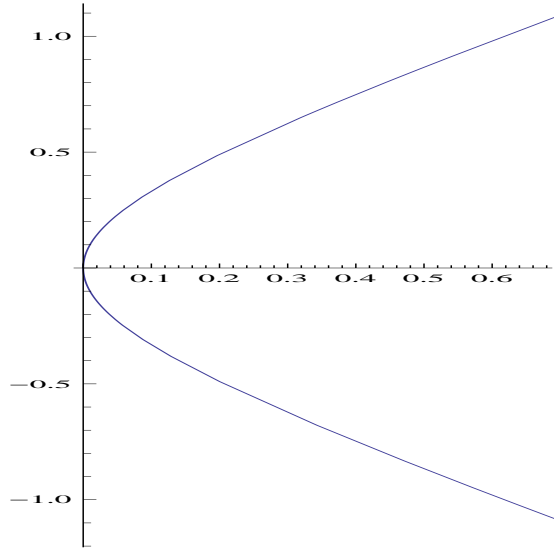} \caption{
Numerical ranges of $H$, $\gamma=1/2$.}
\label{fig011}
\end{figure}

\subsection{Accretivity of $H$}\label{ss3.2}

 An operator is said to be {\it accretive} if 
its numerical range is a subset of  the sector with vertex at the origin
  and {\it semi-angle} $0\leq\omega<\pi/2$,
$$S_{0,\omega}=\{z\in\C: 0\leq|\arg z|\leq\omega\}.$$
An operator $H$ is $m$-{\it accretive} if
its numerical range is contained in the closed right half-plane
and the so called {\it resolvent bound} 
holds:
$$\forall\lambda\in\C,~\Re\lambda<0,~\|(H-\lambda)^{-1}\|\leq1/|\Re \lambda|.$$
\begin{theorem}
The operator $H$ is $m$-accretive.
\end{theorem}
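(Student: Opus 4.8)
The plan is to verify directly the two defining properties of $m$-accretivity given above: that the numerical range lies in the closed right half-plane, and that for every $\lambda$ with $\Re\lambda<0$ the operator $(H-\lambda)^{-1}$ exists on all of ${\cal H}$ and satisfies $\|(H-\lambda)^{-1}\|\le 1/|\Re\lambda|$. Accretivity is immediate: by Theorem \ref{P3.1} the numerical range $W(H)$ lies in the closed right half-plane, the underlying reason being that for $\psi\in{\cal D}(H)\cap{\cal D}(H^*)$ one has $\Re\langle H\psi,\psi\rangle=\langle\tfrac12(H+H^*)\psi,\psi\rangle=\langle H_{ho}\psi,\psi\rangle\ge\tfrac12\|\psi\|^2\ge0$, since $\Re(H)=H_{ho}$ is the harmonic oscillator whose spectrum is bounded below by $1/2$.

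Next I would fix $\lambda$ with $\Re\lambda<0$ and extract an a priori bound. From accretivity, $\Re\langle(H-\lambda)\psi,\psi\rangle=\Re\langle H\psi,\psi\rangle-\Re\lambda\,\|\psi\|^2\ge|\Re\lambda|\,\|\psi\|^2$, and Cauchy--Schwarz then gives $\|(H-\lambda)\psi\|\ge|\Re\lambda|\,\|\psi\|$. Consequently $H-\lambda$ is injective and, since $H$ (hence $H-\lambda$) is closed, its range is closed; on that range the inverse is bounded with $\|(H-\lambda)^{-1}\|\le1/|\Re\lambda|$. What remains, and what is the crux of the argument, is to show that this closed range is all of ${\cal H}$.

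For surjectivity I would invoke the standard relation $\mathrm{Ran}(H-\lambda)^\perp=\ker\bigl((H-\lambda)^*\bigr)=\ker(H^*-\bar\lambda)$, valid because $H-\lambda$ is densely defined and closed. The key observation is that $H^*$ is accretive by the very same computation as $H$: since $H$ is closed we have $H^{**}=H$, so $\Re(H^*)=\tfrac12(H^*+H)=H_{ho}\ge\tfrac12$ and thus $\Re\langle H^*\psi,\psi\rangle\ge0$. Because $\Re\bar\lambda=\Re\lambda<0$, the same a priori bound applied to $H^*$ yields $\|(H^*-\bar\lambda)\psi\|\ge|\Re\lambda|\,\|\psi\|$, so $\ker(H^*-\bar\lambda)=\{0\}$. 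Therefore $\mathrm{Ran}(H-\lambda)^\perp=\{0\}$, and since that range is already closed it must equal ${\cal H}$. Hence every $\lambda$ in the open left half-plane lies in $\rho(H)$ with $\|(H-\lambda)^{-1}\|\le1/|\Re\lambda|$, which together with accretivity proves that $H$ is $m$-accretive.

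I expect the main obstacle to be precisely the surjectivity step, and specifically the domain bookkeeping it requires: one must confirm that the formal adjoint displayed in Section \ref{S2} is genuinely the Hilbert-space adjoint $H^*$, that $H$ and $H^*$ share a common core (for instance ${\cal S}$) on which the identities $\Re(H)=\Re(H^*)=H_{ho}$ and the bound $H_{ho}\ge\tfrac12$ hold rigorously, and that the resulting accretivity estimates extend from this core to all of ${\cal D}(H)$ and ${\cal D}(H^*)$ by closedness. Once these technicalities are settled the kernel/range argument closes the proof with no further computation.
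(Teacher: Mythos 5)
Your proposal is correct, and for the steps the paper actually carries out it coincides with the paper's argument: accretivity is read off from Theorem \ref{P3.1}, and the lower bound $\|(H-\lambda)\psi\|\geq|\Re\lambda|\,\|\psi\|$ is exactly what the paper derives from ${\rm dist}(\lambda,\overline{W(H)})\leq\|(H-\lambda)\psi\|$ for unit $\psi$ --- the same estimate you obtain via $\Re\langle(H-\lambda)\psi,\psi\rangle$ and Cauchy--Schwarz. The genuine difference is your third step. The paper stops at this lower bound, which only yields injectivity of $H-\lambda$ and boundedness of the inverse on ${\rm Ran}(H-\lambda)$; it never shows that this range is all of $\cal H$, which is precisely what separates an accretive operator with a bounded inverse on its range from an $m$-accretive one (the paper's own definition requires $(H-\lambda)^{-1}$ to exist as a bounded operator on $\cal H$). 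You close that gap explicitly: closedness of $H$ plus the lower bound gives closed range, and ${\rm Ran}(H-\lambda)^{\perp}=\ker(H^*-\bar\lambda)=\{0\}$ follows once $H^*$ is known to be accretive --- which holds here because $H^*$ is the same differential expression with $\gamma\mapsto-\gamma$, and the bounding hyperbola of Theorem \ref{P3.1} depends only on $\gamma^2$, so the identical numerical-range computation applies to it. What the paper's version buys is brevity; what yours buys is an actual proof of $m$-accretivity rather than of the a priori resolvent estimate alone. The one point you must still nail down --- and you flag it yourself --- is the identification of the Hilbert-space adjoint $H^*$ with the formal adjoint on the analogous domain; once that domain bookkeeping is settled, your kernel/range argument is complete and is the more rigorous of the two proofs.
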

\begin{proof} Obviously, the operator $H$ is 
{\it accretive} because
$W(H)\subset S_{0,\pi/2}$.
We show that for any $z\in\C,$ with $\Re z<0$, the resolvent bound holds.
We have
$${\rm dist}(z,\overline{W(H)})\leq|\langle H\psi,\psi\rangle-z|=
|\langle(H-z)\psi,\psi\rangle|\leq\|(H-z)\psi\|.$$
As ${\rm dist}(z,\overline{W(H)})\geq|\Re z|,$
having in mind Theorem \ref{P3.1}, the result follows.
\end{proof}

The closed operator $H$ on $\cal H$  has a {\it compact resolvent} if
$\rho(H)\neq{\O}$ and the inverse operator, $(H-\lambda)^{-1}$, for some $\lambda\in\rho(H),$
is  compact. Notice that
$\Re(H)$ is an $m$-accretive
operator, since $\Re(H)$ is Hermitian and $W(\Re(H))$ lies on the positive real axis.

Moreover,
$\Re(H)$ is a closed operator, has a compact resolvent and since the perturbation operator $V$ is relatively bounded with respect to $\Re(H)$
with relative bound smaller than $1$, then $H=\Re(H)+\lambda V$ has a
compact resolvent \cite[Theorem 5.4.1]{Krejcirik*}. Now, by %
\cite[Theorem IX, 2.3]{Krejcirik*},
if $H$ has a compact resolvent, then $\sigma(H)=\sigma_p(H)$.
\subsection{Pseudospectrum}
The set of the eigenfunctions of a selfadjoint operator with purely discrete real spectrum
can be orthonormalized so that it forms an orthonormal basis. Eigenfunctions of a non-Hermitian operator $H$ are typically not orthogonal.
The eigenfunctions  of $H$, that has a purely discrete real spectrum, form a {\it Riesz basis} if $H$ is quasi-Hermitian
(see (\ref{H*Theta}))
with bounded and boundedly invertible metric $\Theta$. Riesz basicity is not preserved by an unbounded operator.

Our objective
is to show that the eigenfunctions $\Psi_n$ do not form a Riesz basis,
and so any metric $\Theta$ in (\ref{H*Theta}) is necessarily {\it singular},
that is, no bounded metric with bounded inverse exists.

For this purpose, we consider the {\it$\epsilon$-pseudospectrum} of $H$, $\epsilon>0,$ denoted and defined as follows
$$\sigma_\epsilon(H):=\{z\in\C:\|(H-z)^{-1}\|>\epsilon^{-1}\}$$
with the convention $\|(H-z)^{-1}\|=\infty$ for $z\in\sigma(H).$ The $\epsilon$-pseudospectrum always contains an
$\epsilon$-neighborhood of the spectrum. If the operator is selfadjoint, equality holds and $H$ is said to have a {\it trivial} pseudospectrum.
A non selfadjoint operator has a typically much larger pseudospectrum, which means that very small perturbations may drastically change the spectrum.

Numerical computations
carried out with {\it Matlab} show  that the pseuspectrum of $H$ is far from being trivial.
A non trivial pseudospectrum ensures the non existence of a bounded metric \cite{siegle}.
\color{black}
\subsection{Completeness of eigenfunctions}
We will show that the eigenfunctions of $H$ form a complete set in $L^2(R).$
Completeness of the system $\{\Psi_n\}$ means that its span is dense in $L^2(\R).$
A basis is complete, but the converse may not be true.

The $m$-accretivity of $H$ implies that $-iH$ is dissipative, i.e.,
$$\Im\langle H\Psi,\Psi\rangle\leq 0,~\forall \Psi\in{\cal D}(H).$$
As a consequence, the imaginary part of $(-iH-\epsilon)^{-1}$, for $\epsilon<0$, is
non-negative,
$$\frac{1}{2i}((-iE-\epsilon)^{-1}-(iH^*-\epsilon)^{-1})\geq0.$$
Applying \cite[Theorem VII,8.1]{gohberg}, the completeness of $\{\Psi_n\}$ follows.
\section{The dynamics of the system}\label{S4}
\subsection{ $\cal G$-quasi bases}
\bigskip

Let $\widetilde\Psi_n(x)$ be the eigenfunction of $H^*$ sharing with
$\Psi_n(x)$ the same eigenvalue of $H$.
The set
$${\cal F}_{\widetilde\Psi}=\{\widetilde\Psi_k(x):k\geq0\}$$
is complete but not necessarily a basis.
It is known \cite{siegle} that eigenfunctions of an operator  $H$ with
purely discrete spectrum form a Riesz basis if and only if $H$ is quasi-Hermitian
with bounded and boundedly invertible metric.

The eigenfunctions $\Psi_n(x),~\widetilde\Psi_m(x),~ m,n=0,1,2,3,\ldots,$
constitute biorthogonal systems, as
$$\langle\Psi_m,\widetilde\Psi_n\rangle=\delta_{mn}.$$
For the physical interpretation of the model, we consider the following subset of $\cal H$:
$${\cal D}_{phys}=\left\{\psi(x)\in{\cal H}:\psi(x)=\sum_{k=0}^\infty c_k~\Psi_k(x),~~c_k\in\C\right\}.$$
This set contains all the physically relevant wave functions of the physical system.
Even if ${\cal F}_\Psi$ is not a basis of $\cal H$, it is sufficient to expand physically meaningful wave functions.

Let
$${\cal G}:={\rm span}\{\Psi_n(x)\}\cap {\rm span}\{\widetilde\Psi_n(x)\}.$$
The following {\it resolution of the identity} holds, for any $f,g\in{\cal G}.$
\begin{eqnarray*}
&&\langle f,g\rangle
=\sum_{m,n}{\langle f,\widetilde\Psi_{n}\rangle\langle\Psi_{n},g\rangle\over
\langle\Psi_{n},\widetilde\Psi_{n}\rangle}
=\sum_{m,n}{\langle f,\Psi_{n}\rangle\langle\widetilde\Psi_{n},g\rangle\over
\langle\widetilde\Psi_{n},\Psi_{n}\rangle}.
\end{eqnarray*}
Following Bagarello, we say that the wave functions
$\Psi_{n}$ and $\widetilde\Psi_{n}$ are $\cal G$-{\it quasi basis} \cite{bagarello1,bagarello2}.

For our purposes it is enough to consider
\begin{eqnarray*}
&&\langle f,g\rangle
=\sum_{m,n}{\langle f,\widetilde\Psi_{n}\rangle\langle\Psi_{n},g\rangle\over
\langle\Psi_{n},\widetilde\Psi_{n}\rangle},
\end{eqnarray*}
for $f\in{\rm span}\{\Psi_n(x)\},~g\in{\rm span}\{\widetilde\Psi_n(x)\},$ a situation which Bagarello also
envisages.

The inner product $\langle\cdot,\cdot\rangle$ is not adequate for expressing the
conservation of the {\it particle number} and for the computation of the expectation value of energy measurements,
as it may yield complex values.
For functions $\Psi_\alpha(x),\Psi_\beta(x)\in\cal S,$ 
we consider
the {\it physical  inner product} defined by
\begin{equation}
\ll\Psi_\alpha,\Psi_\beta\gg=\langle\e^{\gamma x^2}\Psi_\alpha,\e^{\gamma x^2}\Psi_\beta\rangle
=\int_{-\infty}^{+\infty}\e^{2\gamma x^2}\Psi_\alpha(x)\overline{\Psi_\beta(x)}\d x.
\nonumber
\end{equation}
Following Mostazadeh \cite{mostafa}, we say that the {\it physical Hilbert space}
is the space of the functions $\Psi(x)\in {\cal D}_{phys}$ 
endowed with the inner product $\ll\cdot,\cdot\gg$.

The Hamiltonian $H$ is symmetric with respect to the inner product $\ll \cdot,\cdot\gg$, for wave functions
$\Psi(x)\in {\cal D}_{phys}$, because
\begin{eqnarray*}
&&\ll H\Psi,\Psi\gg=\int_{-\infty}^{+\infty}\e^{2\gamma^2}(H\Psi(x))\overline{\Psi(x)}\d x\\
&&
=\int_{-\infty}^{+\infty}(\e^{\gamma^2}H\Psi(x))\overline{\e^{\gamma x^2}\Psi(x)}\d x\\
&&
=\int_{-\infty}^{+\infty}(H_0\e^{\gamma^2}\Psi(x))\overline{\e^{\gamma x^2}\Psi(x)}\d x\\
&&
=\int_{-\infty}^{+\infty}\e^{\gamma^2}\Psi(x)\overline{H_0\e^{\gamma x^2}\Psi(x)}\d x,
\\&&
=\int_{-\infty}^{+\infty}\e^{\gamma^2}\Psi(x)\overline{\e^{\gamma x^2}H\Psi(x)}\d x
=\ll\Psi,H\Psi\gg.
\end{eqnarray*}
Since
$\ll H\Psi,\Psi\gg=\ll\Psi,H\Psi\gg$
for wave functions $\Psi(x)\in{\cal D}_{phys}$, 
it is clear that
the Rayleigh quotients of $H$ are real for the inner product $\ll\cdot,\cdot\gg$,
$$\frac{\ll H\Psi,\Psi\gg}{\ll\Psi,\Psi\gg}\in\R$$
and that a time-invariant norm is obtained
$$\ll\e^{-iHt}\Psi,\e^{-iHt}\Psi\gg=\ll\Psi,\Psi\gg.$$

Let us consider now the pseudo-bosonic operators
\begin{eqnarray*}&&
d:=\e^{-\gamma x^2}a\e^{\gamma x^2}={(1+\gamma^2)^{1/4}}x+\frac{1}{2}~\frac{1}{(1+\gamma^2)^{1/4}}
\left(\frac{\partial}{\partial x}+2\gamma x
\right)
,\\&&
d^\ddag:=\e^{-\gamma x^2}a^*\e^{\gamma x^2}={(1+\gamma^2)^{1/4}}x-\frac{1}{2}~ \frac{1}{(1+\gamma^2)^{1/4}}
\left(\frac{\partial}{\partial x}+2\gamma x
\right).
\end{eqnarray*}
These operators are called {\it pseudo-bosonic} because $b^\ddag\neq b^*$, with respect to the inner product $\langle\cdot,\cdot\rangle$.
The factorization of $H$ in terms of the pseudo-bosonic operators is straightforward:
\begin{equation}\nonumber
H=\sqrt{1+\gamma^2}~d^\ddag d+\frac{1}{2}\sqrt{1+\gamma^2}~\1.
\end{equation}
With respect to the inner product $\ll\cdot,\cdot\gg$,  $d^\ddag$ is the adjoint of $d$.


The physical inner product is appropriate to characterize the transition
probability amplitude from the state $\Psi_\alpha$ to the state $\Psi_\beta,$
\begin{equation}A_{\Psi_\alpha\rightarrow\Psi_\beta}=\frac{\ll\Psi_\alpha,\Psi_\beta\gg}{\sqrt{\ll\Psi_\alpha,\Psi_\alpha\gg
\ll\Psi_\beta,\Psi_\beta\gg}}~.\label{eq*}\end{equation}

\subsection{ Compression of $H$ to a subspace}

In general, the procedure followed in the analysis of the present model
may lead to 
problems which may be avoided by considering a linear combination of a finite number of eigenfunctions $\Psi_i(x)$ of $H$.
This idea is physically attractive in view of the superposition principle, and  has been
put by Bagarello \cite{bagarello2} in a mathematically consistent context.

We assume that $\Psi_i(x)\in L^2(\R)$.
Let $\widehat Q^{-1}\in\C^{n\times n}$ be the matrix  defined by
\begin{eqnarray*}
(\widehat Q^{-1})_{j,i}=\langle\Psi_i,\Psi_j\rangle,\quad i,j=0,1,2,\ldots,n-1.
\end{eqnarray*}
The matrix $\widehat Q^{-1}$ is  positive definite, because
\begin{eqnarray*}
\langle\Psi _0z_0+\ldots+\Psi_{n-1}z_{n-1},\Psi_0z_0+\ldots+ \Psi_{n-1}z_{n-1}\rangle>0,\quad\text{\rm for any}\quad(z_0,\ldots, z_{n-1})\neq0.
\end{eqnarray*}
Let ${\cal H}_n$ be the subspace of $\cal H$ spanned by the wave functions  $\Psi_i(x),~i=0,\ldots,n-1.$
An orthonormal basis of this finite dimensional space  is constituted by the wave functions
$$\Phi_j(x)=\sum_{k=0}^{n-1}(\widehat Q^{1/2})_{kj}\Psi_k(x),~j=0,\ldots,n-1.$$
In ${\cal H}_n,$ the Hamiltonian $H$ is represented by the matrix $\widehat H\in\C^{n\times n}$ with entries
\begin{eqnarray*}
&&\widehat H_{ji}=\langle H\Phi_i,\Phi_j\rangle\\
&&=\sum_{k=0}^{n-1}\sum_{l=0}^{n-1}(\widehat Q^{1/2})_{ki}\overline{(\widehat Q^{1/2})_{lj}}\langle H\Psi_k,\Psi_l\rangle\\
&&=\sum_{k=0}^{n-1}\sum_{l=0}^{n-1}(\widehat Q^{1/2})_{ki}{(\widehat Q^{1/2})_{jl}}\lambda_k\langle\Psi_k,\Psi_l\rangle\\
&&=\sum_{k=0}^{n-1}\sum_{l=0}^{n-1}(\widehat Q^{1/2})_{ki}{(\widehat Q^{1/2})_{jl}}\lambda_k(\widehat Q^{-1})_{lk}\\
&&=\sum_{k=0}^{n-1}{(\widehat Q^{-1/2})_{jk}}\lambda_k(\widehat Q^{1/2})_{ki},
\end{eqnarray*}
where $\lambda_0,\ldots,\lambda_{n-1}$ are the eigenvalues of $H$ associated, respectively, with the eigenfunctions
$\Psi_0(x),\ldots,\Psi_{n-1}(x)$. The matrix $\widehat H$ is non-Hermitian because
$$
\widehat H_{ji}=\sum_{k=0}^{n-1}{(\widehat Q^{-1/2})_{jk}}\lambda_k(\widehat Q^{1/2})_{ki}
\neq\sum_{k=0}^{n-1}{(\widehat Q^{1/2})_{ik}}\lambda_k(\widehat Q^{-1/2})_{kj}=
\overline{\widehat H_{ij}}=(\widehat H^*)_{ji}.$$
 The matrix $\widehat H$ is {\it pseudo-Hermitian}, since
$$\sum_{h=0}^{n-1}\widehat Q_{jh}\widehat H_{hi}=\sum_{h=0}^{n-1}(\widehat H^*)_{jh}\widehat Q_{hi},
$$
that is
$$\widehat Q\widehat H=\widehat H^*\widehat Q.$$

Let us consider the vectors
\begin{eqnarray*}
\widehat E_0=(1,0,\ldots,0)^T,\ldots,\widehat E_{n-1}=(0,\ldots,0,1)^T\in\C^n,
\end{eqnarray*}
and also the vectors
\begin{eqnarray*}
\widehat\Psi_i=\widehat Q^{-1/2}\widehat E_i,\quad i=0,\ldots, n-1.
\end{eqnarray*}
Since
$$\widehat H=\widehat Q^{-1/2}{\rm diag}(\lambda_0,\ldots,\lambda_{n-1})\widehat Q^{1/2}$$
it follows that
$$\widehat H\widehat \Psi_i=\lambda_i\widehat \Psi_i,\quad i=0,\ldots,n-1.$$
Notice that
$$\langle\widehat \Psi_i,\widehat \Psi_j\rangle=(\widehat Q^{-1})_{ji}.$$
The vectors
\begin{eqnarray*}
\widehat{\widetilde\Psi}_i=Q^{1/2}\widehat E_i,\quad i=0,\ldots, n-1,
\end{eqnarray*}
are orthonormal to the eigenvectors $\widehat \Psi_j,$
\begin{eqnarray*}
\langle\widehat{\widetilde\Psi}_i,\widehat\Psi_j\rangle=\delta_{ij}.
\end{eqnarray*}
These vectors are eigenvectors of $\widehat H^*$, and the functions
$$\widetilde\Psi_i(x)=d^{*~i}\widetilde\Psi_0(x),\quad d^{\ddag~*}\widetilde\Psi_0(x)=0,$$
are eigenfunctions of $H^*.$ In fact, we have
$ H^*{\widetilde\Psi}_i=\lambda_i{\widetilde\Psi}_i$ and
$$\widehat H^*\widehat{\widetilde\Psi}_i=\lambda_i\widehat{\widetilde\Psi}_i.$$

The standard inner product $\langle\cdot,\cdot\rangle$ in $\C^n$, is given by
$$\langle\widehat\Psi_\alpha,\widehat\Psi_\beta\rangle=
\sum_{j=0}^{n-1}\overline\beta_j\alpha_j
$$
where
$\widehat\Psi_\alpha=(\alpha_1,\ldots,\alpha_n)^T\in\C^n,~~\widehat\Psi_\beta=(\beta_1,\ldots,\beta_n)^T\in\C^n.$
We also introduce the physical inner product $\ll\cdot,\cdot\gg$
defined by
$$\ll\widehat\Psi_\alpha,\widehat\Psi_\beta\gg:=\langle\widehat Q\widehat\Psi_\alpha,\widehat\Psi_\beta\rangle=\sum_{i,j=0}^{n-1}
\widehat Q_{ij}\alpha_j\overline\beta_i.$$
We may now define the {\it energy expectation value} as
$$E=\frac{\ll H\widehat\Psi,\widehat\Psi\gg}{\ll\widehat\Psi,\widehat\Psi\gg}\in\R$$
and the {\it transition probability amplitude} as
\begin{equation}A_{\widehat\Psi_\alpha\rightarrow\widehat\Psi_\beta}=\frac{\ll\widehat\Psi_\alpha,\widehat\Psi_\beta\gg}
{\sqrt{\ll\widehat\Psi_\alpha,\widehat\Psi_\alpha\gg
\ll\widehat\Psi_\beta,\widehat\Psi_\beta\gg}}~.\label{eq**}\end{equation}
The expressions (\ref{eq*}) and (\ref{eq**}) are in consonance, but (\ref{eq*}) applies to functions
and (\ref{eq**}) to vectors in the compressed subspace.

\begin{rmk}
The non Hermitian matrix $\widehat H$ is similar to the Hermitian matrix
${\rm diag}(\lambda_0,\ldots\lambda_n)$,
and the space spanned by the eigenvectors $\widehat\Psi_0,\ldots,\widehat\Psi_n$ of $\widehat H$
coincides with the space spanned by the eigenvectors $\widehat E_0,\ldots,\widehat E_n$ of
${\rm diag}(\lambda_0,\ldots\lambda_n)$.
On the other hand, because the operator $\exp(\gamma x^2)$ is unbounded,
the non Hermitian operator $ H$ is not similar to the Hermitian operator
$H_0$,
and the space spanned by the eigenfunctions $\Psi_0,\Psi_1,\ldots$ of $ H$ does not
coincide with the space spanned by the eigenfunctions $\Phi_0,\Phi_n,\ldots$ of
$H_0$.
\end{rmk}

\section{Final remarks}\label{S6}
Viewing the operator $H$ as the Hamiltonian of a physical model, problems arise
from non Hermiticity.
The original inner product, defined in the Hilbert space $\cal H$ where $H$ lives, is not adequate for
the physical interpretation of the model.
A new inner product, which is
appropriate for that purpose, has been introduced.
{Although $H$ is non-Hermitian with respect to the initial inner product $\langle\cdot,\cdot\rangle$,
it becomes symmetric with respect to the physical inner product $\ll\cdot,\cdot\gg=\langle \Theta\cdot,\cdot\rangle$.}

We  define a subspace of
the Hilbert space such that the compression of the Hamiltonian operator
to that  subspace shares part of the spectrum and eigenfunctions
of the original one.
The referred subspace remains invariant under
the action of $H$.
However, stating that a Hermitian operator represents in a
reasonable sense the non-Hermitian operator may be controversial,
since relevant information on the Hamiltonian may not be captured in the subspace where it lives.

The concept of pseudospectrum is of great relevance for the description
of non-Hermitian operators in the context of QM, as well as the one of numerical range.
Non-Hermitian operators have typically non-trivial pseudospectra.
If the quasi-Hermiticity relation
holds with a positive bounded and boundedly invertible
metric, then the pseudospectrum of $H$ is trivial. A non-trivial pseudospectrum determines the
non-existence of such a metric.
Moreover, it implies the non existence of a Riesz basis anf of an orthonormal basis, which are very useful
for a rigorous mathematical foundation of QM.

A new  inner product, which is adequate for the physical interpretation, has been consistently introduced.

The case of Hamiltonians possessing complex eigenvalues is of an entirely different nature.
It arises in connection with models of dissipative or absorptive processes.

\end{document}